\newcommand{\E}{{\rm E}}
\begin{document}



\RUNTITLE{New Additive OCBA Procedures for \\ Robust Ranking and Selection}

\TITLE{New Additive OCBA Procedures for \\ Robust Ranking and Selection}

\ARTICLEAUTHORS{%
    \AUTHOR{Yuchen Wan}
    \AFF{School of Data Science, Fudan University, Shanghai, China\\\EMAIL{ycwan22@m.fudan.edu.cn}}
    \AUTHOR{Zaile Li\footnote{Corresponding author}}
    \AFF{Technology and Operations Management Area, INSEAD, Fontainebleau, France \\\EMAIL{zaile.li@insead.edu}}
    \AUTHOR{L. Jeff Hong}
    \AFF{Department of Industrial and Systems Engineering, University of Minnesota, Minneapolis, Minnesota\\\EMAIL{lhong@umn.edu}}
} 

\ABSTRACT{%
Robust ranking and selection (R\&S) is an important and challenging variation of conventional R\&S that seeks to select the best alternative among a finite set of alternatives. It captures the common input uncertainty in the simulation model by using an ambiguity set to include multiple possible input distributions and shifts to select the best alternative with the smallest worst-case mean performance over the ambiguity set. In this paper, we aim at developing new fixed-budget robust R\&S procedures to minimize the probability of incorrect selection (PICS) under a limited sampling budget. Inspired by an additive upper bound of the PICS, we derive a new asymptotically optimal solution to the budget allocation problem. Accordingly, we design a new sequential optimal computing budget allocation (OCBA) procedure to solve robust R\&S problems efficiently. We then conduct a comprehensive numerical study to verify the superiority of our robust OCBA procedure over existing ones. The numerical study also provides insights on the budget allocation behaviors that lead to enhanced efficiency. 
}%


\KEYWORDS{robust ranking and selection, input uncertainty, OCBA, additive}


\maketitle
\section{Introduction}
Ranking and selection (R\&S) aims to select the best alternative with the smallest (or largest) mean performance among a finite set of simulated alternatives. Due to its broad relevance for supporting decision making in stochastic environments, it has become one fundamental topic in the simulation area. Numerous R\&S procedures (a.k.a. algorithms) have been developed to guide the simulation and selection decisions in a R\&S study. A recent, comprehensive review of R\&S procedures can be found in \cite{hong2021review}. In the past few years, the R\&S literature has witnessed an increasing trend of  using R\&S procedures in artificial intelligence algorithms to help improve the performance; see, e.g., \cite{liu2023efficient} and \cite{zhang2024sample}.

As a simulation-based decision supporting tool, the practical value of R\&S hinges on the proper specification of the input distribution in the underlying simulation model used to simulate the alternatives. Conventionally, R\&S studies implicitly assume that the input distribution is known and accurate. However, in practice, this is not always possible to achieve. Commonly, the input distribution is estimated from limited real-world input data, leading to the curse of input uncertainty. Such phenomenon is well-known in the simulation literature and attracts extensive research attention in recent years \citep{song2015input, song2017input, song2019input, zhou2017simulation}.

To the best of our knowledge, \cite{fan2013robust} make the first attempt to handle input uncertainty for R\&S problems. They borrow ideas from robust optimization \citep{ben2002robust} and formulate the robust R\&S framework. Robust R\&S characterizes the input uncertainty by a discrete ambiguity set which contains several possible input distributions that all fit well the input data. Given the ambiguity set, robust R\&S  seeks to select the best alternative with the smallest worst-case mean performance over the ambiguity set. Likewise to robust optimization, robust R\&S involves a two-layer minimax structure. The inner layer defines each alternative's worst-case distribution (or scenario) leading to its worse-case (largest) mean performance over the input distributions in the ambiguity set and the outer layer incorporates comparisons among the worst-case mean performances of alternatives.  The robust perspective ensures the terminal selection of the best alternative to be moderately appealing in the presence of input uncertainty. 

The two-layer structure makes robust R\&S more difficult to solve than conventional R\&S and necessitates the development of new procedures. \cite{fan2013robust, fan2020distributionally} adopt a fixed-precision formulation of robust R\&S and design procedures to achieve a pre-specified probability of correct selection (PCS). To accomplish this, they extend the traditional pairwise comparison approach \citep{kim2006selecting}. To control the overall probability of incorrect selection (PICS)  of a procedure, they develop two generic upper bounds for the PICS --- one is \textit{multiplicative} and the other is \textit{additive}. The multiplicative bound is over-loose because it accounts for $km-1$ pairwise comparisons among all $km$ scenarios (Here $k$ and $m$ denote the total number of alternatives and the total number of input distributions in the ambiguity set, respectively). In comparison, the additive bound identifies only $k+m-2$ critical pairwise comparisons in the robust R\&S problem, and thus may be much tighter than the multiplicative bound. Based on this intuition, they develop additive two-stage procedures and show that the additive bound requires a smaller budget than the multiplicative version while ensuring the same PCS. However, despite the additive bound's advantage, they fail to develop efficient sequential procedures with the additive bound due to the pairwise elimination structure chosen for the procedure design. Readers may refer to Section \ref{subsec: decomposition} for a detailed introduction of the additive bound.

Alongside the fixed-precision robust R\&S procedures of \cite{fan2013robust, fan2020distributionally}, developing fixed-budget robust R\&S procedures that minimizes the PICS under a limited sampling budget attracts certain interest. Notably, \cite{gao2016optimal, gao2017robust} successfully extend the traditional fixed-budget  optimal computing budget allocation (OCBA) approach to address robust R\&S problems. The OCBA approach, which originates from \cite{chen2000simulation}, is one of the more prevalent approaches to solve fixed-budget R\&S problems \citep{chen2011stochastic}. With a limited simulation budget, the OCBA approach seeks to derive the optimal budget allocation scheme among the alternatives to minimize the resulted PICS. Due to the intractability of the PICS, the approach typically uses simple and decomposed approximations of the PICS as surrogates for the objective, which enables to derive clear closed-form solutions for the budget allocation problem in terms of the true parameter values of the alternatives. Since the true parameter values are unknown, sequential OCBA procedures are implemented to approximate the optimal solutions dynamically. This approach takes an intuitive optimization viewpoint of R\&S and demonstrates a strong performance in solving various problems \citep{xu2015simulation, almomani2016ordinal}. 
Following this venue, \cite{gao2017robust} derive an upper bound for the PICS of robust R\&S problems and solve the optimal allocation solution, based on which they propose the robust OCBA (R-OCBA) procedure. The procedure demonstrates superior performances over several benchmark procedures in numerical experiments.
Besides, there have been several other attempts to extend traditional fixed-budget R\&S approach to solve robust R\&S problems; see, e.g.,  \cite{zhang2016sequential} and \cite{wan2023upper}. 


Interestingly, \cite{gao2017robust} also notice the advantage of being ``additive'' as \cite{fan2020distributionally} do. From the derived solution to the optimal budget allocation problem, \cite{gao2017robust} observe that the sampling budget allocation should concentrate only on {$k+m-1$} scenarios including the $m$ scenarios of the best alternative and the worse-case scenario of each non-best alternative. 
They then interpret the performance of their R-OCBA procedure in terms of this intuition. However, despite this understanding, they do not provide numerical evidence to support this. Moreover, it can be observed that the PICS upper bound on which they base the analysis and the procedure development is multiplicative. This may explain why their R-OCBA procedure is much more complex than the original OCBA procedure.

In this paper, we derive an alternative optimal budget allocation scheme for the fixed-budget robust R\&S  and then develop new additive robust OCBA procedures that are simpler but more efficient than existing ones. Specifically, we first adapt the additive PICS bound of fixed-precision procedures in \cite{fan2020distributionally} to a fixed-budget context, based on which we then intend to solve the optimal budget allocation that minimizes this PICS bound under a given sampling budget. Interestingly, we find that the budget allocation problem of robust R\&S problems involving $km$ scenarios may be translated into the traditional form for R\&S problems involving only $k+m-1$ scenario-defined alternatives, indicating an additive structure. These meaningful alternatives include the $m$ scenarios of the best alternative and the worse-case scenario of each non-best alternative. This result is beautiful from our viewpoint. It establishes a link between fixed-budget R\&S and robust R\&S. With this link, the optimal budget allocation scheme can be immediately obtained by adopting the traditional OCBA solution. This new solution further corroborates the aforementioned insight of \cite{gao2017robust} in a more clear and explicit way.

Moreover, we develop a new robust OCBA procedure based on the new optimal budget allocation solution. Given the above link between fixed-budget R\&S and robust R\&S, we are endowed the flexibility to  apply traditional OCBA procedures, e.g., the one of \cite{chen2000simulation}, to solve robust R\&S problems. By doing so, we identify an important issue that may significantly affect the performance of robust OCBA procedures. In each round, the OCBA procedures need to allocate a small batch of $\Delta \geq 1$ observations to the alternatives to meet the updated optimal allocation. This process may encounter a tricky numerical issue of dividing $\Delta$ to the alternatives with unmet budget needs; see Section 4.1.1 of \cite{chen2011stochastic} for a more detailed discussion. A common practice to avoid this is to allocate the observations to the most-starving alternative with the maximal budget gap. However, we observe that this most-starving strategy may lead to very poor performance in solving robust R\&S problems due to its greedy nature. To overcome this inefficiency, we propose to use a proportional allocation rule to determine the stage-wise budget allocation in the OCBA procedure tailored for solving robust R\&S problems. Numerical experiments show that doing so may significantly improve performance. The new OCBA procedure with the proportional allocation rule is named AR-OCBA procedure. Notice many other OCBA procedures with different allocation rules are readily easy to apply, but we choose to focus on this procedure.

Finally, we perform a comprehensive numerical study to understand the performance of the AR-OCBA procedure and compare it with the R-OCBA procedure of \cite{gao2017robust} and the R-UCB procedure of \cite{wan2023upper}. The numerical study produces several key findings. First, the AR-OCBA procedure displays superior performance over the R-OCBA and R-UCB procedures. Furthermore, this performance improvement may become more considerable as the problem scale grows, regardless of whether the problem scale being indexed by the number of alternatives $k$ or the total number of input distributions in the ambiguity set $m$. Second, as the total sampling budget grows for a fixed problem scale, the PCS of the AR-OCBA procedure may grow to 1, indicating its statistical consistency. This property is strongly desirable in the R\&S literature as it assures the value of using a larger sampling budget.  Moreover, the AR-OCBA procedure demonstrates a higher momentum towards consistency when given an increasing sampling budget compared to the R-OCBA and R-UCB procedures. Lastly, the sample size allocation analysis demonstrates the additive structure of the AR-OCBA procedure. Following our new additive optimal budget allocation scheme, the AR-OCBA procedure tends to allocate most of the sampling budget toward only $k+m-1$ scenarios among all $km$ scenarios, which may explain the observed advantages of the AR-OCBA procedure.

{We summarize the contributions of this paper as follows:\begin{itemize}
    \item We adopt an additive upper bound of the PICS to derive a new optimal budget allocation formulation for fixed-budget robust R\&S problems, further highlighting the additive structure inherent in robust R\&S problems.
    \item We identify a connection between the optimal budget allocation problem for robust R\&S and the OCBA approach for traditional R\&S. This connection allows us to derive the optimal budget allocation solution for robust R\&S problems effectively.
    \item  Based on the new optimal budget allocation solution, we propose the AR-OCBA procedure, which incorporates a proportional allocation rule to enhance its performance. This procedure is simpler than and demonstrated to outperform existing robust R\&S procedures.
\end{itemize}}

The rest of this paper is organized as follows. In Section \ref{sec: problem}, we introduce the fixed-budget robust R\&S problem and the additive rule of decomposing the PICS. Then, in Section \ref{sec: OCBA}, we derive a new optimal budget allocation solution under the additive decomposition, based on which we further design a sequential robust R\&S procedure in Section \ref{sec: procedures}. We include in Section \ref{sec: numerical} the numerical experiments and discuss interesting findings from the experiments. Lastly, we conclude the paper in Section \ref{sec: conclusion}.

\vspace{0.1cm}
\section{Problem Statement and Preliminaries}
\label{sec: problem}
\subsection{Fixed-Budget Robust Ranking \& Selection}
\label{subsec: problem_def}

Consider a finite collection of alternatives, denoted by \( S = \{s_{1}, s_{2}, \ldots, s_{k}\} \). Each alternative \( s \in S \) is associated with a performance measure \( g(s, \zeta) \), where \( \zeta \) represents the input variable that captures the uncertainty. The goal is to identify the best alternative, which is defined as the one with the smallest expected performance. Since the performance measure \( g(s, \zeta) \) is a black-box function, it is typically evaluated through simulations. To initiate these simulations, estimating the distribution of \( \zeta \) is necessary. However, there is inherent ambiguity in defining the exact distribution of \( \zeta \) due to limited input data, which leads to what is known as input uncertainty.

Following the approach in \cite{fan2013robust}, we model this uncertainty by introducing an ambiguity set \( \mathcal{P} = \{P_1, \ldots, P_m\} \), which represents a collection of \( m \) possible probability distributions that \( \zeta \) might follow. Adopting this robust perspective, we define the best alternative as the one that minimizes the worst-case expected performance across all distributions in the ambiguity set. Mathematically, this is expressed as
\begin{eqnarray}
\label{equation:minimax formulation}
    \min_{s\in S} \max_{P\in \mathcal{P}} \E_{P}[g(s,\zeta)].
\end{eqnarray}
This minimax framework, referred to as robust selection of the best (RSB) in \cite{fan2013robust, fan2020distributionally}, features a two-layer structure, each with its own objective. Specifically, in the inner layer, the goal is to identify the worst-case distribution from the ambiguity set \( \mathcal{P} \) that maximizes the mean performance for each alternative. In the outer layer, these worst-case performances are compared across all \( k \) alternatives, and the alternative with the smallest worst-case mean performance is selected as the best.

To clarify the analysis, we define a \textit{scenario} as the combination of an alternative \( s_i \in S \) and a probability distribution \( P_j \in \mathcal{P} \), denoted as scenario \( (i,j) \), where \( i = 1, 2, \ldots, k \) and \( j = 1, 2, \ldots, m \). Consequently, there are \( km \) scenarios in total. For each scenario \( (i,j) \), we may collect a series of independent and identically distributed observations through simulation, denoted as \( X_{ij,1}, X_{ij,2}, \dots \). The true mean and variance for scenario \( (i,j) \) are represented by \( \mu_{ij} = E_{P_j}[g(s_i, \zeta)] \) and \( \sigma_{ij}^{2} = \text{Var}_{P_j}[g(s_i, \zeta)] \), respectively. 
{
Following the convention in the R\&S literature, we assume that the observations of the scenarios are normally distributed, as summarized in the following assumption.
\begin{assumption}
\label{normal distribution}
    Observations of every scenario $(i,j)$ are {mutually independent} and normally distributed, i.e., ${X_{ij}} \sim N(\mu_{ij},\sigma_{ij}^2)$. 
\end{assumption}}

{Without loss of generality, we further assume that the true means of each alternative $i$ are in descending order, that is, $\mu_{i1} \geq \mu_{i2} \geq \ldots \geq \mu_{im}$ for each $i \in \{1, \cdots, k\}$. Furthermore, the worst-case means of the alternatives are also in descending order, satisfying $\mu_{k1} \geq \ldots \geq \mu_{21} >\mu_{11}$. } Consequently, 
alternative 1 is identified as the unique best alternative, with probability distribution 1 serving as its worst-case scenario, as defined by Equation \eqref{equation:minimax formulation}. For a clearer understanding,  Figure \ref{fig:example0} illustrates the mean configuration of an example problem where there are \( k = 3 \) alternatives and \( m = 3 \) distributions in the ambiguity set, resulting in a total of \( km = 9 \) scenarios. For each alternative \( i = 1, 2, 3 \), the mean performances follow the pattern \( \mu_{i1} > \mu_{i2} > \mu_{i3} \), meaning that scenario \( (i,1) \) represents the worst-case scenario for alternative \( i \). Among the worst-case scenarios \( (1,1), (2,1), \) and \( (3,1) \), the means satisfy \( \mu_{11} < \mu_{21} < \mu_{31} \). According to Equation \eqref{equation:minimax formulation}, alternative 1 is identified as the best alternative.

\begin{figure}
    \centering
    \includegraphics[width=0.37\linewidth]{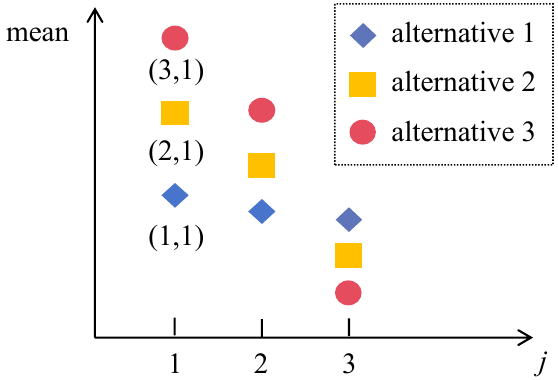}
    \caption{An example problem with $k=3$ and $m=3$}
    \label{fig:example0}
\end{figure}



In this paper, we address the problem in Equation \eqref{equation:minimax formulation} under a fixed-budget formulation. Let \( N \) denote the total  sampling budget, which is to be distributed across the \( km \) scenarios. The sample size for scenario \( (i,j) \) after $n$ observations has been allocated in total is denoted by \( n_{ij}(n) \), and \( \bar{X}_{ij}(n_{ij}) \) represents the sample mean for scenario \( (i,j) \) based on its corresponding sample size $n_{ij}$. Therefore, when the total  sampling budget \( N \) is exhausted, the terminal sample mean for scenario \( (i,j) \) is \( \bar{X}_{ij}(n_{ij}(N)) \). At this point, the worst-case scenario of each alternative will be selected based on the sample means, and the alternative \( \hat{i}^* \) with the minimal worst-case sample mean will be selected as the best alternative. This selection is based on comparing the sample means of all \( km \) scenarios, as expressed by
\begin{eqnarray}
  \label{eq: standard}
    \hat{i}^* = \underset{i \in \{1, \ldots,k\}}{\arg\min}\left(\underset{j \in \{1, \ldots,m\}}{\max} \bar{X}_{ij}(n_{ij}(N))\right).  
\end{eqnarray}


Due to the limited budget and inherent simulation noise, the selected alternative \( \hat{i}^* \) may not always correspond to the true best alternative. To evaluate the precision of the final selection, we use the Probability of Incorrect Selection (PICS). PICS measures the likelihood that the chosen best alternative \( \hat{i}^* \) is not equal to the actual best alternative, which is alternative 1.
The PICS of a fixed-budget robust R\&S procedure can be expressed as
\begin{eqnarray}
\label{equation:PCS}
         \notag \mbox{PICS} = \mbox{Pr}\left\{ \hat{i}^* \neq 1\right\} &=& \mbox{Pr}\left\{ \bigcup_{i=2}^k \left\{\underset{1\leq j \leq m}{\max} \bar{X}_{1j}(n_{1j}(N)) > \underset{1\leq j \leq m}{\max} \bar{X}_{ij}(n_{ij}(N))\right\} \right\}\\
        &=& \mbox{Pr}\left\{ \bigcup_{i=2}^k \bigcap_{j'=1}^m \bigcup_{j''=1}^m \left\{\bar{X}_{1j''}(n_{1j''}(N)) > \bar{X}_{ij'}(n_{ij'}(N)) \right\}\right\}.
\end{eqnarray}
Equation \eqref{equation:PCS} gives a formal definition of PICS, though it takes a complex and intractable form. 


In this work, our objective is to design an optimal budget allocation strategy that minimizes PICS, subject to the constraint of a finite sampling budget \( N \). Following the optimal-computing-budget-allocation (OCBA) approach, this challenge can be framed as a constrained optimization problem, formulated as follows:
\begin{eqnarray}
\label{problem formulation}
\begin{aligned}
\min_{n_{ij}} \quad & \mbox{PICS} & \\
\mbox{s.t.}\quad
&\sum_{i=1}^{k} \sum_{j=1}^{m} n_{ij}=N,&\\
&n_{ij}\geq 0, i=1,2,\ldots,k, j=1,2,\ldots,m.  &
\end{aligned}
\end{eqnarray}
For simplicity, the requirement that each $n_{ij}$ should be an integer is relaxed. Note that by setting \( m = 1 \), Problem \eqref{problem formulation} reduces to the traditional OCBA problem as described in \cite{chen2000simulation}, indicating that Problem \eqref{problem formulation} can be viewed as a generalization of the traditional OCBA problem. PICS serves as the objective to be optimized. However, as shown in Equation \eqref{equation:PCS}, the complexity and intractability of PICS present a significant challenge in solving the problem. Following the conventional approach in OCBA studies, we seek a simpler and more tractable approximation for PICS. In \cite{gao2017robust}, the PICS is decomposed into a more manageable form as
\begin{eqnarray}
\label{equation: Gao}
    \mbox{PICS} \leq\sum_{i=2}^k \sum_{j=1}^m \Pr \left\{ \bar{X}_{1j}(n_{1j}) > \bar{X}_{i1}(n_{i1}) \right\}.
\end{eqnarray}
The PICS bound in Equation \eqref{equation: Gao} is \textit{multiplicative} in that it involves $m(k-1)$ terms.  Having this bound for the PICS, \cite{gao2017robust} use it in Problem \eqref{problem formulation} as a surrogate objective, solve the problem when $N \rightarrow \infty$ and propose the R-OCBA procedure accordingly, which outperforms benchmark procedures in numerical experiments. However, we find it to be overly complex due to the multiplicative structure. In the next subsection, we introduce an additive upper bound from \cite{fan2020distributionally} to approximate the PICS.

\subsection{Additive Upper Bound of PICS}
\label{subsec: decomposition}

The multiplicative structure of the PICS bound in Equation \eqref{equation: Gao} makes Problem \eqref{problem formulation} still challenging to solve when using it as the objective. What we desire is a simpler decomposition approach that may simplify the form of the PICS bound. Coincidentally, we identify that \cite{fan2020distributionally} derive a such bound, although in a fixed-precision context. Specifically, in their Section 3.2, \cite{fan2020distributionally} derive that
\begin{eqnarray}
\label{equation: additive rule}
        \mbox{PICS}
        \notag &=& \Pr\left\{ \bigcup_{i=2}^k \left\{  \underset{1\leq j \leq m}{\max} \bar{X}_{1j} > \underset{1\leq j \leq m}{\max} \bar{X}_{ij} \right\}\right\}\\
        \notag  &\leq& \Pr\left\{ \bigcup_{i=2}^k \left\{ \underset{1\leq j \leq m}{\max} \bar{X}_{1j} > \underset{1\leq j \leq m}{\max} \bar{X}_{ij} \right\} \bigcup \bigcup_{j=2}^m \left\{ \bar{X}_{11} < \bar{X}_{1j} \right\} \right\}\\
        \notag  &=& \Pr\left\{\left\{ \bigcup_{i=2}^k \left\{ \underset{1\leq j \leq m}{\max} \bar{X}_{1j} > \underset{1\leq j \leq m}{\max} \bar{X}_{ij} \right\}\bigcap \left\{\bar{X}_{11} \geq \underset{1\leq j \leq m}{\max} \bar{X}_{1j} \right\}\right\} \bigcup \bigcup_{j=2}^m \left\{ \bar{X}_{11} < \bar{X}_{1j} \right\} \right\}\\
        \notag  &=& \Pr\left\{ \bigcup_{i=2}^k \left\{\bar{X}_{11} >  \underset{1\leq j \leq m}{\max} \bar{X}_{ij}\right\} \bigcup  \bigcup_{j=2}^m \left\{\bar{X}_{11} < \bar{X}_{1j}\right\} \right\}\\
       \notag   &\leq& \Pr\left\{ \bigcup_{i=2}^k \left\{ \bar{X}_{11} > \bar{X}_{i1} \right\}\right\} + \Pr \left\{ \bigcup_{j=2}^m \left\{ \bar{X}_{11} < \bar{X}_{1j}\right\} \right\}\\
     &\leq& \sum_{i=2}^k \Pr\left\{ \bar{X}_{11} >  \bar{X}_{i1}\right\} + \sum_{j=2}^m \Pr\left\{\bar{X}_{11} < \bar{X}_{1j}\right\},
\end{eqnarray}
where notations of sample sizes are omitted for clarity, and the second equality holds because $ A \cup B = (A \cap B^C) \cup B$ for events $A$ and $B$. This new PICS bound in Equation \eqref{equation: additive rule} may be referred to as \textit{additive} in that it involves only $k+m-2$ terms. 

Comparing the PICS upper bounds in Equation \eqref{equation: Gao} and Equation \eqref{equation: additive rule}, it is interesting to observe that, first, when $m=1$, the two bounds become identical and are equal to the traditional PICS bound used to solve the OCBA problem of generic R\&S problems \citep{chen2011stochastic}. Second, they both highlight the significance and relevance of only $k+m-1$ scenarios out of all $km$ scenarios including the $m$ scenarios of the best alternative and the worst-case scenario of every non-best alternative. This provides an important insight in deciding the optimal budget allocation. 

Despite these links, however, we also observe an essential difference between the two bounds. Equation \eqref{equation: Gao} involves pairwise comparisons between the $m$ scenarios of the best alternative and {the $k-1$ worse-case scenarios of the non-best alternatives}, resulting $m(k-1)$ terms. Equation \eqref{equation: additive rule}, however, takes a different viewpoint that centers around scenario (1,1). It implies that \textit{there are $k + m-2$ ``critical'' pairwise comparisons} \citep{fan2020distributionally}: scenario (1,1) against all other $m-1$ scenarios of the best alternative and scenario (1,1) against the worst-case scenario of every non-best alternative. Intuitively speaking, when $k$ and $m$ is relatively large, the additive bound in Equation \eqref{equation: additive rule} may be much tighter. In the next section, we will use this bound as the objective in Problem \eqref{problem formulation}  and  show that it may lead to a simple and clean solution to the problem.

It is worth noting that although we adopt the additive bound of the PICS from \cite{fan2020distributionally}, there are two key differences between our work and theirs. First, \cite{fan2020distributionally} apply this bound to design fixed-precision robust R\&S procedures, whereas we utilize it to address fixed-budget robust R\&S problems. Second, they believe that such a bound could not be applicable in sequential procedures and thus design only two-stage additive procedures, but as will we show in the next sections, we successfully design a well-performing sequential procedure with the additive bound.

\vspace{0.1cm}
\section{New OCBA Formulation for Robust R\&S}
\label{sec: OCBA}
In this section, we use the additive PICS bound in Equation \eqref{equation: additive rule} as the surrogate objective in Problem \eqref{problem formulation} and solve the asymptotically optimal budget allocation solution. Specifically, denote
\begin{equation*}
    f({\vec{n}}) = \sum_{i=2}^k \Pr\left\{\bar{X}_{11}(n_{11})> \bar{X}_{i1}(n_{i1}) \right\} + \sum_{j=2}^m \Pr\left\{ \bar{X}_{11}(n_{11})<\bar{X}_{1j}(n_{1j})\right\},
\end{equation*}
where $\vec{n}=(n_{11},\ldots,n_{21},\ldots,n_{k1},n_{12}\ldots,n_{1m})$ is a $km \times 1$ dimensional vector of sample sizes of all scenarios. {Under Assumption \ref{normal distribution}, the sample mean $\bar{X}_{ij}$ of scenario $(i,j)$ is also normally distributed, i.e., $\bar{X}_{ij} \sim N\left(\mu_{ij},\frac{\sigma_{ij}^2}{n_{ij}}\right)$, where $n_{ij}$ is the sample size of scenario $(i,j)$. 
Let $\Phi$ denote the cumulative density function of the standard normal distribution. By leveraging properties of the normal distribution, we have
    \begin{eqnarray}
    \label{equation: normalization0}
            f(\vec{n}) 
           \notag &=&  \sum_{i=2}^k \Pr\left\{\bar{X}_{i1}(n_{i1}) - \bar{X}_{11}(n_{11})<0  \right\} + \sum_{j=2}^m \Pr\left\{ \bar{X}_{11}(n_{11})-\bar{X}_{1j}(n_{1j})<0\right\}\\
           \notag &=& \sum_{i=2}^k \Phi \left( \frac{-(\mu_{i1}-\mu_{11})}{\sqrt{\frac{\sigma_{i1}^2}{n_{i1}} + \frac{\sigma_{11}^2}{n_{11}}}} \right) + \sum_{j=2}^m \Phi \left( \frac{-(\mu_{11}-\mu_{1j})}{\sqrt{\frac{\sigma_{11}^2}{n_{11}} + \frac{\sigma_{1j}^2}{n_{1j}}}} \right)\\
            &=& \sum_{i=2}^k \Phi \left( \frac{-\delta_{i1}}{\sqrt{\frac{\sigma_{i1}^2}{n_{i1}} + \frac{\sigma_{11}^2}{n_{11}}}} \right) + \sum_{j=2}^m \Phi \left( \frac{-\delta_{1j}}{\sqrt{\frac{\sigma_{11}^2}{n_{11}} + \frac{\sigma_{1j}^2}{n_{1j}}}} \right)
    \end{eqnarray}
where $\delta_{i1}:=\mu_{i1}-\mu_{11}$ and $\delta_{1j}:=\mu_{11}-\mu_{1j}$, representing the mean difference between scenario (1,1)  and other relevant scenarios. Notice that  $\delta_{i1} > 0$ and $\delta_{1j} \geq 0$  for $i=2,3,\ldots,k$ and $j=2,3,\ldots,m$.}

With $f(\vec{n})$, the budget allocation problem in Problem \eqref{problem formulation} becomes
\begin{eqnarray}
\label{problem formulation2}
\begin{aligned}
\min_{n_{ij}} \quad & f({\vec{n}}) & \\
\mbox{s.t.}\quad
&\sum_{i=1}^{k} \sum_{j=1}^{m} n_{ij}=N,&\\
&n_{ij}\geq 0, i=1,2,\ldots,k, j=1,2,\ldots,m.  &
\end{aligned}
\end{eqnarray}
For this problem, denote {$n_{ij}^*$} for $i=1, \cdots, k, j=1, \cdots, m$ as the optimal solution. Before solving the optimization problem, it can be observed that, among all $km$ scenarios, only $k+m-1$ scenarios are relevant to the problem, including the $m$ scenarios of the best alternative and the worst-case scenario of each non-best alternative. Furthermore, we can prove that
\begin{lemma}
\label{lemma: non-optimal}
    $n_{ij}^*=0$ when $j \neq 1$ and $i \neq 1$.
\end{lemma}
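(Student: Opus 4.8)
The plan is to show that the objective $f(\vec{n})$ does not depend on the sample sizes $n_{ij}$ with $i\neq 1$ and $j\neq 1$ at all, so that any budget assigned to such a scenario is wasted; an optimal solution must therefore set all of them to zero in order to free up budget for the scenarios that actually matter. Concretely, inspecting the closed form in Equation \eqref{equation: normalization0}, the only sample sizes appearing in $f$ are $n_{11}$, the $n_{i1}$ for $i=2,\ldots,k$ (through the first sum), and the $n_{1j}$ for $j=2,\ldots,m$ (through the second sum). None of the variables $n_{ij}$ with $i\geq 2$ and $j\geq 2$ occurs anywhere in $f(\vec{n})$; hence $f$ is constant in each such coordinate.

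From this observation the argument is a short exchange/perturbation step. Suppose, for contradiction, that $\vec{n}^*$ is optimal but $n_{i_0 j_0}^*>0$ for some $i_0\neq 1$, $j_0\neq 1$. Define a new feasible point $\vec{n}'$ by setting $n'_{i_0 j_0}=0$ and adding the freed amount $n_{i_0 j_0}^*$ to one of the relevant scenarios, say $n'_{11}=n_{11}^*+n_{i_0 j_0}^*$, keeping all other coordinates unchanged; this still satisfies the budget equality and the nonnegativity constraints. Since $f$ does not depend on $n_{i_0 j_0}$, and since increasing $n_{11}$ strictly decreases the argument magnitude in every $\Phi(\cdot)$ term — each term is of the form $\Phi\!\left(-\delta/\sqrt{\sigma_{11}^2/n_{11}+\cdots}\right)$ with $\delta>0$ (using $\delta_{i1}>0$ for $i\geq 2$, and for the $j$-terms noting that any $\delta_{1j}=0$ term is already identically $1/2$ and unaffected, while $\delta_{1j}>0$ terms strictly decrease) — we get $f(\vec{n}')\leq f(\vec{n}^*)$, with strict inequality whenever at least one relevant $\delta$ is positive. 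Because $\delta_{21}\geq\cdots>\delta_{11}=0$ forces $\delta_{i1}>0$ for all $i\geq 2$, there is always such a strictly positive term, so $f(\vec{n}')<f(\vec{n}^*)$, contradicting optimality. Hence $n_{ij}^*=0$ for all $i\neq 1$, $j\neq 1$.

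The only mild subtlety — and the step I would be most careful about — is the boundary behavior: if $n_{11}^*=0$ then the arguments of the $\Phi$ terms are $-\infty$ (so each such term is $0$) and "strictly decreasing in $n_{11}$" needs the convention that moving from $n_{11}=0$ to $n_{11}>0$ keeps $f$ at its minimal value $0$ there; but in that degenerate case $f(\vec{n}^*)=0$ already, and one can instead reallocate the mass of $n_{i_0j_0}^*$ to any relevant scenario while still not increasing $f$, and then observe that strictly positive allocations to $n_{11}$ and the other relevant scenarios are needed to keep PICS meaningful — alternatively, one simply notes $f\geq 0$ and the reallocated point is still optimal, so \emph{an} optimal solution with the claimed zero pattern exists. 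To keep the statement clean, I would either (i) phrase the lemma as asserting existence of an optimal solution with this structure, or (ii) invoke the (standard, and implicitly used throughout the OCBA literature) fact that at optimality the budget is spread so that $n_{11}^*>0$, which rules out the degenerate case. Either way, the heart of the proof is the trivial-looking but decisive remark that $f$ is functionally independent of the $(i,j)$ coordinates with $i,j\neq 1$, inherited directly from the additive decomposition in Equation \eqref{equation: additive rule}.
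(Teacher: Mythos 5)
Your proposal is correct and follows essentially the same route as the paper's own proof: both rest on the observation that $f(\vec{n})$ in Equation \eqref{equation: normalization0} is functionally independent of every $n_{ij}$ with $i\neq 1$ and $j\neq 1$, while being (strictly, via $\delta_{i1}>0$) decreasing in the relevant coordinates, so any budget placed on an irrelevant scenario can be reallocated to strictly improve the objective. Your explicit exchange argument and your attention to the degenerate boundary case $n_{11}^*=0$ and to the $\delta_{1j}=0$ terms make the monotonicity claim airtight where the paper leaves it implicit, but they do not constitute a different approach.
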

\begin{proof}{Proof.}
{In Equation \eqref{equation: normalization0}, the terms $ \frac{-\delta_{i1}}{\sqrt{{\sigma_{i1}^2}/{n_{i1}} + {\sigma_{11}^2}/{n_{11}}}}$ and $\frac{-\delta_{1j}}{\sqrt{{\sigma_{11}^2}/{n_{11}} + {\sigma_{1j}^2}/{n_{1j}}}}$ decrease as \(n_{11}\), \(n_{i1}\), or \(n_{1j}\) increases. Consequently, \(f(\vec{n})\) is monotonically decreasing with respect to \(n_{11}\), \(n_{i1}\), and \(n_{1j}\). Moreover, \(f(\vec{n})\) is independent of \(n_{ij}\) for \(i \neq 1\) and \(j \neq 1\). Therefore, the optimal solution to Problem \eqref{problem formulation2} must satisfy \(n_{ij}^* = 0\) when \(j \neq 1\) and \(i \neq 1\). 
Thus, the conclusion of interest is proved.}\hfill \Halmos
 
\end{proof}



Now we proceed to solve $n_{ij}^*$ for scenarios with {$i=1$, any $j$ or $i\neq1, j=1$}. For ease of presentation, we introduce a new index set $R=\{1,2,\ldots,k+m-1\}$.
Specifically speaking, $r=1$ corresponds to the worst-case scenario of the best alternative, i.e., scenario $(1,1)$. $r=2,\ldots,k$ corresponds to the worst-case scenarios of non-best alternatives, i.e., scenario $(i,1)$ for $i \neq 1$. $r=k+1,\ldots,k+m-1$ corresponds to the non-worst-case scenarios of the best alternative, i.e., scenario $(1,j)$ for $j \neq 1$. 
See Figure \ref{fig:new index} for an intuitive understanding. 
\begin{figure}[htbp]
\centering\includegraphics[width=0.6\linewidth]{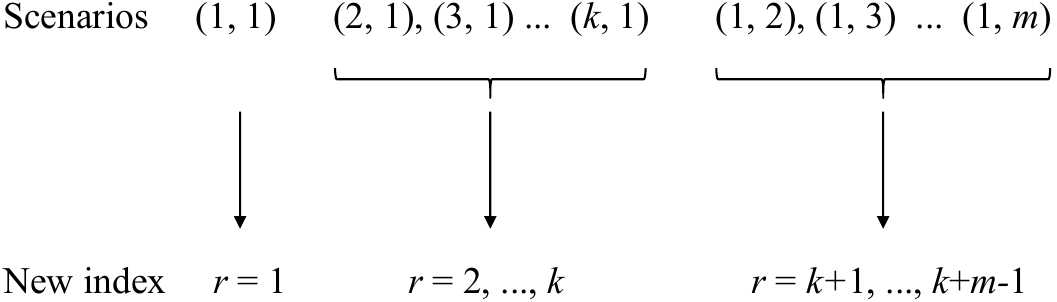}
    \caption{Formalization of the new index set}
    \label{fig:new index}
\end{figure}

Equipped with the new index set, $f(\vec{n})$ in Equation \eqref{equation: normalization0} can be rewritten as
\begin{eqnarray}
\label{equation: final of f}
    f(\vec{n}) = \sum_{r=2}^{k+m-1} \Phi \left( \frac{-\delta_r}{\sqrt{\frac{\sigma_1^2}{n_1}+\frac{\sigma_r^2}{n_r}}} \right)
\end{eqnarray}
where $\delta_r = |\mu_{1} - \mu_{r}|$ is the absolute mean difference between two scenarios in the new index set, and $\sigma_r^2$ and $n_r$ are variance and sample size of the corresponding scenario. 
Since $f(\vec{n})$ is exclusively determined by the sample sizes of the $k+m-1$ relevant scenarios, we can reformulate Problem \eqref{problem formulation2} as
\begin{eqnarray}
\label{new formulation}
\begin{aligned}
\max_{n_{r}} \quad & 1-\sum_{r=2}^{k+m-1} \Phi \left( \frac{-\delta_r}{\sqrt{\frac{\sigma_1^2}{n_1}+\frac{\sigma_r^2}{n_r}}}\right) & \\
\mbox{s.t.}\quad
&\sum_{r=1}^{k+m-1}n_{r}=N,&\\
&n_{r}\geq 0,r=1,2,\ldots,k+m-1.  &
\end{aligned}
\end{eqnarray}
Interestingly, we find this new optimization problem is equivalent to the OCBA problem for a traditional R\&S problem with $k+m-1$ alternatives and alternative $r=1$ being the best alternative. 
This finding shows that with the additive PICS lower bound in Equation \eqref{equation: additive rule}, a two-layer robust R\&S problem with $km$ scenarios can be treated as a single-layer R\&S problem with only $k+m-1$ alternatives under the OCBA framework. 
Such transformation has the advantage of significantly reducing the problem scale.  For example, if $k=m=20$, it turns a robust R\&S problem with $km=400$ scenarios to a single-layer R\&S problem with only $k+m-1=39$ alternatives. The problem scale reduces by $90.25\%$ in this example. Conceivably, the larger $k$ or $m$ is, the more significant the reduction becomes. Also, it allows for the flexible application of various traditional OCBA procedures to solve robust R\&S problems. Later in Section \ref{sec: procedures} we will make use of this transformation to develop a new fixed-budget robust R\&S procedure.


In light of this equivalence, 
we may immediately obtain the optimal solution to the new problem by adapting traditional OCBA solutions from \cite{chen2000simulation}. Then we obtain Lemma \ref{lemma: solution}.

\begin{lemma}
\label{lemma: solution}
    As $N \rightarrow \infty$, the optimal solution to Problem \eqref{new formulation} satisfies
\begin{eqnarray}
\label{equation: solution}
    \begin{aligned}
    \frac{n_r^*}{n_{s}^*} &= \left( \frac{\sigma_r/\delta_r}{\sigma_{s}/\delta_{s}}\right)^2 \text{ for all } r,s = 2,\ldots,k+m-1,\\
    n_1^* &= \sigma_1 \sqrt{\sum_{r=2}^{k+m-1} \frac{n_r^{*2}}{\sigma_r^2}},
\end{aligned}
\end{eqnarray}
where $n_r^*$ is the optimal sample size of alternative $r$. 
\end{lemma}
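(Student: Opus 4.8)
The plan is to exploit the reformulation established just above Lemma~\ref{lemma: solution}: Problem~\eqref{new formulation} is exactly the OCBA budget-allocation problem for a plain R\&S instance with $k+m-1$ alternatives indexed by $r\in R$, with alternative~$1$ the unique best, means $\mu_r$, and variances $\sigma_r^2$. Consequently Lemma~\ref{lemma: solution} is a direct instantiation of the classical asymptotically optimal OCBA allocation of \cite{chen2000simulation}, so the most economical proof simply invokes their Theorem~1. For a self-contained argument I would instead run a Lagrangian/KKT analysis on the minimization of $f(\vec n)=\sum_{r=2}^{k+m-1}\Phi\!\left(-\delta_r/\sqrt{\sigma_1^2/n_1+\sigma_r^2/n_r}\right)$ over $\{\sum_{r\in R}n_r=N,\ n_r\ge 0\}$.

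First I would check the maximizer is interior, i.e.\ $n_r^*>0$ for every $r\in R$: if $n_r^*=0$ for some $r\ge 2$, the $r$-th term equals $\Phi(0)=\tfrac12$, which is strictly worse than what any small positive allocation to that scenario achieves, and $n_1^*=0$ makes every term equal $\tfrac12$; so the nonnegativity multipliers vanish and the KKT conditions reduce to the stationarity equations $\partial f/\partial n_r=-\lambda$, $r\in R$, for a common $\lambda>0$. Writing $s_r^2:=\sigma_1^2/n_1+\sigma_r^2/n_r$ and using $\Phi'=\phi$, differentiation gives $\partial f/\partial n_r=-\tfrac{\delta_r\sigma_r^2}{2 n_r^2 s_r^3}\,\phi(\delta_r/s_r)$ for $r\ge 2$ and $\partial f/\partial n_1=-\sum_{r=2}^{k+m-1}\tfrac{\delta_r\sigma_1^2}{2 n_1^2 s_r^3}\,\phi(\delta_r/s_r)$.

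The second identity in \eqref{equation: solution} then follows at finite $N$ by pure algebra: the $r\ge 2$ equations read $\tfrac{\delta_r}{2 s_r^3}\phi(\delta_r/s_r)=\lambda n_r^2/\sigma_r^2$; substituting these into the $r=1$ equation and cancelling $\lambda$ yields $\tfrac{\sigma_1^2}{n_1^2}\sum_{r=2}^{k+m-1}\tfrac{n_r^2}{\sigma_r^2}=1$, i.e.\ $n_1^*=\sigma_1\sqrt{\sum_{r=2}^{k+m-1}(n_r^*)^2/\sigma_r^2}$. For the ratios I would equate the stationarity conditions of two indices $r,s\ge 2$, take logarithms, and let $N\to\infty$: along the optimal path all $n_r$ grow linearly in $N$, so $s_r^2=\Theta(1/N)$ and $\log\phi(\delta_r/s_r)=-\delta_r^2/(2 s_r^2)+O(\log N)$ while the remaining factors are $O(\log N)$, hence matching the dominant $\Theta(N)$ terms forces $\delta_r^2/(\sigma_1^2/n_1+\sigma_r^2/n_r)$ to be asymptotically constant in $r$; discarding the lower-order $\sigma_1^2/n_1$ contribution to $s_r^2$ (equivalently, neglecting the best scenario's sampling noise next to the competitor's, which is the approximation under which the OCBA rule is stated and which is self-consistent as $k+m$ grows since then $n_1^*\gg n_r^*$ by the second identity) collapses this to $\delta_r^2 n_r/\sigma_r^2=\delta_s^2 n_s/\sigma_s^2$, i.e.\ $n_r^*/n_s^*\to(\sigma_r/\delta_r)^2/(\sigma_s/\delta_s)^2$.

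The main obstacle is precisely this last step: the literal $N\to\infty$ limit of the exact optimizer only delivers the rate-balance condition with the full $s_r^2$, and recovering the clean closed form in \eqref{equation: solution} needs the standard OCBA simplification of dropping the $\sigma_1^2/n_1$ term, together with the accompanying bookkeeping that the polynomial and logarithmic prefactors cannot perturb the leading-order ratio. In the write-up I would therefore cite Theorem~1 of \cite{chen2000simulation} for this asymptotic step rather than reproduce it, or, as a transparent alternative, first replace $f$ by its Gaussian-tail surrogate $\sum_{r\ge 2}\tfrac{1}{\sqrt{2\pi}}\tfrac{s_r}{\delta_r}e^{-\delta_r^2/(2 s_r^2)}$ (valid since $\Phi(-x)\sim\phi(x)/x$) and repeat the Lagrangian computation. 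A minor caveat to flag is the degenerate case $\delta_{1j}=0$ (ties among the best alternative's own scenarios), where the relevant derivative vanishes identically; the standing strict ordering of the worst-case means keeps $\delta_{i1}>0$, so one either additionally assumes $\delta_{1j}>0$ or removes those tied scenarios from the active index set.
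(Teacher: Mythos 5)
Your proposal is correct and follows essentially the same route as the paper: the paper itself gives no independent derivation, but simply observes that Problem \eqref{new formulation} is the classical OCBA problem for a traditional R\&S instance with $k+m-1$ alternatives and imports the asymptotically optimal allocation of \cite{chen2000simulation}, which is exactly your primary argument. Your supplementary KKT sketch (the exact finite-$N$ identity for $n_1^*$ and the asymptotic rate-balancing step requiring the standard drop of the $\sigma_1^2/n_1$ term) is a faithful reconstruction of the cited classical derivation rather than a different approach.
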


Combining Lemma \ref{lemma: non-optimal} and Lemma \ref{lemma: solution}, we may obtain the optimal solution to Problem \eqref{problem formulation2}, which is stated in the following theorem.

\begin{theorem}
\label{theorem}
\label{thm: OCBA}
Suppose that  Assumption \ref{normal distribution} holds. Given a total sampling budget $N$ to be allocated to $km$ scenarios, as $N \rightarrow \infty$, 
the optimal solution to Problem \eqref{problem formulation2} satisfies
\begin{eqnarray}
\label{equation: solution2}
    \begin{aligned}
n_{ij}^* &=0 \text{ for all } i = 2, \cdots, k, j=2\cdots, m,\\
    \frac{n_{p1}^*}{n_{q1}^*} &= \left( \frac{\sigma_{p1}/(\mu_{p1}-\mu_{11})}{\sigma_{q1}/(\mu_{q1}-\mu_{11})}\right)^2 \text{ for all } p, q=2, \cdots, k,\\
\frac{n_{1s}^*}{n_{1t}^*} &= \left( \frac{\sigma_{1s}/(\mu_{11}-\mu_{1s})}{\sigma_{1t}/(\mu_{11}-\mu_{1t})}\right)^2 \text{ for all } s, t=2,\cdots, m,\\
\frac{n_{i1}^*}{n_{1j}^*} &= \left( \frac{\sigma_{i1}/(\mu_{i1}-\mu_{11})}{\sigma_{1j}/(\mu_{11}-\mu_{1j})}\right)^2 \text{ for all } i = 2, \cdots, k, j=2\cdots, m,\\
n_{11}^* &= \sigma_{11} \sqrt{\sum_{j=2}^{m}\frac{n_{1j}^{*2}}{\sigma_{1j}^2} + \sum_{i=2}^{k}\frac{n_{i1}^{*2}}{\sigma_{i1}^2} }.
\end{aligned}
\end{eqnarray}
\end{theorem}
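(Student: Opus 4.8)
The plan is to obtain Theorem \ref{thm: OCBA} by composing the two preceding lemmas and then unwinding the re-indexing $r \leftrightarrow (i,j)$ introduced before Equation \eqref{equation: final of f}. First I would invoke Lemma \ref{lemma: non-optimal}, which directly delivers the first line of \eqref{equation: solution2}: since $f(\vec n)$ is independent of $n_{ij}$ for $i \neq 1, j \neq 1$ and strictly decreasing in each of $n_{11}, n_{i1}, n_{1j}$, any optimal allocation must place zero budget on the ``irrelevant'' scenarios (otherwise shifting that mass onto scenario $(1,1)$ strictly decreases the objective while preserving feasibility). Consequently the entire budget $N$ is spent on the $k+m-1$ relevant scenarios, and Problem \eqref{problem formulation2} collapses exactly onto Problem \eqref{new formulation} under the bijection $R \ni r \leftrightarrow (i,j)$.

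Next I would apply Lemma \ref{lemma: solution} to this reduced problem and translate the resulting identities in \eqref{equation: solution} back to the $(i,j)$ labels. Recall $r=1$ is scenario $(1,1)$, so $\sigma_1 = \sigma_{11}$ and $n_1^* = n_{11}^*$; for $r \in \{2,\dots,k\}$, $r$ is scenario $(r,1)$, so $\sigma_r = \sigma_{r1}$, $n_r^* = n_{r1}^*$, and $\delta_r = |\mu_1 - \mu_r| = \mu_{r1} - \mu_{11} > 0$ by the ordering assumption; for $r \in \{k+1,\dots,k+m-1\}$, $r$ is scenario $(1,j)$ with $j = r-k+1$, so $\sigma_r = \sigma_{1j}$, $n_r^* = n_{1j}^*$, and $\delta_r = |\mu_{11} - \mu_{1j}| = \mu_{11} - \mu_{1j} \geq 0$. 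Substituting into the ratio identity of \eqref{equation: solution} with both indices in $\{2,\dots,k\}$ gives the $n_{p1}^*/n_{q1}^*$ line; with both in $\{k+1,\dots,k+m-1\}$ gives the $n_{1s}^*/n_{1t}^*$ line; with one index of each type gives the cross line $n_{i1}^*/n_{1j}^*$; and the balance equation $n_1^* = \sigma_1\sqrt{\sum_{r=2}^{k+m-1} n_r^{*2}/\sigma_r^2}$ becomes the stated expression for $n_{11}^*$ once the sum over $r$ is split into the two index blocks $\{2,\dots,k\}$ and $\{k+1,\dots,k+m-1\}$.

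The substantive content therefore sits entirely in the two lemmas, and the only point requiring care is the hand-off between them: Lemma \ref{lemma: non-optimal} must be applied \emph{first} so that the $N \to \infty$ analysis of Lemma \ref{lemma: solution} may legitimately be run on the reduced $(k+m-1)$-scenario problem carrying the full budget $N$. This is valid because Lemma \ref{lemma: non-optimal} is a non-asymptotic statement holding for every finite $N$, hence it pins the irrelevant coordinates to zero along any sequence $N \to \infty$; I expect verifying this compatibility to be the main (and essentially only) obstacle, everything else being a direct quotation of \eqref{equation: solution} together with relabeling. I would also add a brief remark that when ties among the $\mu_{1j}$ force some $\delta_{1j} = 0$, the corresponding ratio in \eqref{equation: solution2} is to be read in the limiting sense (that scenario absorbing a vanishing fraction of the budget), consistent with the convention already used around Equation \eqref{equation: normalization0}.
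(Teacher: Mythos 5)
Your proposal matches the paper's own (very terse) proof exactly: the theorem is obtained by combining Lemma \ref{lemma: non-optimal} (to zero out the scenarios with $i\neq 1$, $j\neq 1$ and collapse Problem \eqref{problem formulation2} onto Problem \eqref{new formulation}) with Lemma \ref{lemma: solution}, and then translating the index $r$ back to $(i,j)$, with your relabeling $j=r-k+1$ for $r\in\{k+1,\dots,k+m-1\}$ being the correct one. Your added remarks on the order of application of the two lemmas and on ties among the $\mu_{1j}$ are sensible refinements but do not change the route.
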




In Theorem \ref{theorem}, the optimal allocation of observations \( n_{ij}^* \) across various scenarios \( (i, j) \) is determined by several key relationships, which may help prioritize which scenarios should receive attention in the sampling process. These relationships convey two critical points of the optimal budget allocation scheme. First, in the optimal solution,  most of the scenarios receive no observations at all, highlighting the additive structure of robust R\&S. This dramatically reduces the number of scenarios needing attention, focusing budget allocation on a subset of scenarios that includes \( i = 1 \) or \( j = 1 \). Second, scenario \( (1, 1) \) serves as a reference point for the allocation of observations. It plays a pivotal role in the entire allocation scheme, influencing how the observations are distributed. 


\vspace{0.1cm}
\section{New Sequential OCBA Procedure for Robust R\&S}
\label{sec: procedures}

Optimally allocating the total sampling budget, as described in Theorem \ref{thm: OCBA}, assumes that the means and variances of the alternatives are known in advance, which is unrealistic in practical applications. To address such limitation, the OCBA literature has developed sequential procedures. These procedures iteratively plug in sample estimates of the unknown parameters into balancing equations as in Equation \eqref{equation: solution2}, allowing the budget allocation to adapt as more data becomes available. At each step, a new small batch of observations is allocated based on the updated estimates, and this process continues until the total sampling budget is exhausted \citep{chen2000simulation, gao2017robust}. Given the equivalence between the OCBA problem for robust R\&S with $km$ scenarios (Problem \ref{problem formulation2}) and the OCBA problem for traditional R\&S with $k+m-1$ alternatives (Problem \ref{new formulation}), we may adapt existing sequential OCBA algorithms to solve robust R\&S problems. In this section, we develop an efficient sequential OCBA procedure specifically tailored for solving robust R\&S problems.

\subsection{Procedure Design}

Following the standard OCBA approach \citep{chen2011stochastic}, we propose a meta OCBA procedure for solving robust R\&S problems, detailed in Procedure \ref{algo1}. The procedure consists of two phases. In the initialization phase, it takes $n_0 \geq 2$ observations from each scenario $(i,j)$ to estimate the sample mean $\bar{X}_{ij}(n_0)$ and sample variance $\hat{\sigma}_{ij}^2(n_0)$. Afterward, the procedure transitions into the sequential allocation phase. At each stage, it first identifies the worst-case scenario for each alternative $i$, based on the current sample information, denoted as $\hat{1}_i$. Among these worst-case scenarios, it then identifies the current best alternative $\hat{b} = \argmin_{i=1,\cdots,k} \bar{X}_{i\hat{1}_i}$.  Using the identified worst-case scenarios for each alternative and the remaining scenarios for the current best alternative, the procedure increases the sampling budget by $\Delta \geq 1$ and recalculates the optimal budget allocation. It then allocates a small batch of $\Delta$ observations to the $k+m-1$ relevant scenarios to meet the updated allocation. When a scenario receives new observations, its sample mean, sample variance, and sample size are updated accordingly. Upon exhausting the total sampling budget $N$, the procedure concludes by returning the current best alternative as the final best alternative.

\begin{algorithm}[htbp]
  \caption{Meta OCBA Procedure for Robust R\&S}
  \begin{algorithmic}[1]
    \Require
        $km$ scenarios of $k$ alternatives, total sampling budget $N$,  
      initialization sample size $n_0 \geq 2$, and batch size $\Delta \geq 1$.
       \State Take $n_0$ observations from each scenario $(i,j)$ and calculate sample mean $\bar{X}_{ij}(n_0)$ and sample  variance $\hat{\sigma}_{ij}^2(n_0)$.
       \State Set $n_{ij} \gets n_0$ for each scenario $(i,j)$, $N_{used} \gets kmn_0$ and $t \gets 0$;
       \While {$N_{used}+\Delta < N$}
       \State Find $\hat 1_i  \gets \argmax_{j=1, \cdots, m} \bar{X}_{ij}(n_{ij})$ for each alternative $i = 1,2,\ldots,k$;
       \State Identify the current best alternative $\hat{b} \gets \argmin_{i=1, \cdots, k} \bar{X}_{i\hat 1_i}(n_{i\hat 1_i})$;
       \State {Form the new alternative set $R$ with $k+m-1$ scenarios $(1, \hat 1_1), \cdots, (i, \hat 1_i), (\hat{b},1), \cdots, (\hat{b}, m)$ where $i = 1, \ldots, k$ and $i \neq \hat{b}$;}
              \State Let $t \gets t+1, N_{used} \gets N_{used} + \Delta$ and calculate the new optimal budget allocation $\{n_r^t\}_{r=1, \cdots, |R|}$ within $R$ according to Equation \eqref{equation: solution} for the sampling budget $N_{used}$;
       \State \textit{Allocate $\Delta$ observations among scenarios in $R$ according to $\{n_r^t\}_{r=1, \cdots, |R|}$}; \label{line allocation}
\State Update the sample size, sample mean and sample variance for each scenario in $R$;
    \EndWhile
    \State Select alternative $\argmin_{i=1, \cdots, k}\max_{j=1, \cdots, m} \bar{X}_{ij}(n_{ij})$. 
  \end{algorithmic}
  \label{algo1}
\end{algorithm}


Procedure \ref{algo1} outlines the framework of our OCBA procedure for solving robust R\&S problems, which leverages the connection between Problem \eqref{problem formulation2} and Problem \eqref{new formulation} as discussed in the previous section. Notice that in the sequential allocation phase, a stage-wise allocation rule is required to determine how the incremental $\Delta$ observations should be allocated to the $k+m-1$ scenarios in the set $R$ (see Line \ref{line allocation}). This allocation rule may significantly impact the performance of the procedure. In the following subsection, we explore proper allocation rules for $\Delta$.

\subsection{Stage-wise Budget Allocation Rules}
\label{subsec: rules}
In each sequential allocation stage of OCBA procedures, the incremental budget  $\Delta$ should  ideally be allocated so that the updated sample sizes $n_r$ closely approximate the updated optimal sample size $n_{r}^{t}$ for each alternative $r$ (here we use the terminology for traditional R\&S problems since we focus on the $k+m-1$ scenarios in the set $R$).   To meet such condition, traditional OCBA procedures, such as the one of \cite{chen2000simulation}, allocate additional $\max(0,n_{r}^{t} - n_{r})$ observations to each alternative $r$, i.e., allocate as many observations as required by each alternative. However, this satisficing rule  comes with a problem: the total number of allocated observations 
$\sum_{r=1}^{k+m-1}\max(0,n_{r}^{t} - n_{r})$
may be significantly larger than $\Delta$, particularly when the total number of alternatives $k+m-1$ is large, causing inefficient use of the sampling budget. To overcome this issue, a common practice is to use the {most-starving rule}. 

The \textit{most-starving} rule allocates the incremental $\Delta$ observations to the alternative $r^*$ that requires the most additional observations, i.e., the alternative with the largest sample size gap:
\begin{equation*}
    r^*={\argmax_{r=1, \cdots, k+m-1}\{0,n_{r}^{t}-n_{r}\}}. 
\end{equation*}
This rule is simple to implement, but when alternative $r^*$ is actually a scenario with $i\neq 1$ and $j\neq 1$, it may receive more observations than necessary (see Theorem \ref{theorem}), leading to the inefficient use of the limited budget. Furthermore, the other scenarios that also require additional observations might be neglected, resulting in insufficient exploration. A common, special case of this rule occurs when $\Delta=1$, where only one observation is allocated to the scenario with the maximum sample size gap at each stage. However, this may possibly lead to a slow allocation process.

In this paper, we propose a new \textit{proportional allocation rule}. Under this rule, the number of observations allocated to each alternative $r$ in the set $R$ at stage $t$ is given by
\begin{eqnarray}
\label{equation: pro}
    \left\lceil \frac{\max\{0,n_{r}^{t}-n_r\} \Delta}{\sum_{r}^{k+m-1} \max\{0,n_{r}^{t}-n_r\}}   \right\rceil.
\end{eqnarray}
This formula ensures that the incremental $\Delta$ observations will be distributed proportionally among the scenarios that require additional observations, i.e., satisfy $n_{r}^{t} - n_r > 0$. If $n_{r}^{t} - n_r \leq 0$, the scenario is considered to need no further observations. Unlike the most-starving rule, the proportional rule takes into account all scenarios requiring additional observations and allocates the budget in proportion to the extent of their need.

\begin{algorithm}[htbp]
  \caption{AR-OCBA Procedure}
  \begin{algorithmic}[1]
    \Require
        $km$ scenarios of $k$ alternatives, total sampling budget $N$,  
      initialization sample size $n_0 \geq 2$, and batch size $\Delta \geq 1$.
       \State Take $n_0$ observations from each scenario $(i,j)$ and calculate sample mean $\bar{X}_{ij}(n_0)$ and sample  variance $\hat{\sigma}_{ij}^2(n_0)$.
       \State Set $n_{ij} \gets n_0$ for each scenario $(i,j)$, $N_{used} \gets kmn_0$ and $t \gets 0$;
       \While {$N_{used}+\Delta < N$}
             \State Lines 4-6 of Procedure \ref{algo1};
       \State Let $t \gets t+1$ and calculate the new optimal budget allocation $\{n_r^t\}_{r=1, \cdots, |R|}$ within $R$ according to Equation \eqref{equation: solution} for the sampling budget $N' \gets N_{used} + \Delta$;
       \State Take $\left\lceil \frac{\max\{0,n_{r}^{t}-n_r\} \Delta} {\sum_{r}^{k+m-1} \max\{0,n_{r}^{t}-n_r\}}   \right\rceil$ observations from each scenario $r$ in the set $R$ and let $N_{used} \gets N_{used} + \sum_{i=1}^{k+m-1} \left\lceil \frac{\max\{0,n_{r}^{t}-n_r\} \Delta} {\sum_{r}^{k+m-1} \max\{0,n_{r}^{t}-n_r\}}   \right\rceil$;
       \State Update the sample size, sample mean and sample variance for each scenario in $R$;
    \EndWhile
    \State Select alternative $\argmin_{i=1, \cdots, k}\max_{j=1, \cdots, m} \bar{X}_{ij}(n_{ij})$. 
  \end{algorithmic}
  \label{AR-OCBA}
\end{algorithm}

We compare these two allocation rules in Section \ref{susec: compare_rules} through numerical experiments with varying parameter settings. Our results show that the proportional rule may significantly outperform the most-starving rule in minimizing the PICS. Therefore, we adopt the proportional rule in our new OCBA procedure, which we refer to as the AR-OCBA procedure. Procedure \ref{AR-OCBA} provides a detailed description of the AR-OCBA procedure. 


\vspace{0.1cm}
\section{Numerical Experiments}
\label{sec: numerical}
In this section, we conduct extensive numerical experiments to understand the performance of our AR-OCBA procedure and {compare it with that of existing fixed-budget robust R\&S procedures.} Specifically, in Section \ref{susec: compare_rules}, we study the performance of the AR-OCBA procedure with different stage-wise budget allocation rules that are discussed in Section \ref{subsec: rules}. We then compare the procedure with {existing robust R\&S procedures} in Section \ref{subsec: comparison}. Moreover,  we conduct a sensitivity analysis for several procedure parameters in Section \ref{subsec: parameters}. Lastly, we investigate the sampling budget allocation behaviors of the AR-OCBA procedure in Section \ref{subsec: allocation}.

Throughout the first three subsections, we follow \cite{gao2017robust} and \cite{fan2020distributionally} to consider the following monotone configuration for the mean values (MM) of all $km$ scenarios:
\begin{eqnarray*}
            \left[\mu_{i j}\right]_{k \times m} = \big[ 0.5 \times i - 0.2 \times j-1\big]_{1\leq i \leq k, 1\leq j \leq m}.
\end{eqnarray*}
For the variances, we consider three types of configurations:
\begin{itemize}
    \item Configuration with constant variances (CV): $\sigma_{ij}^2 = 16^2$ for all $i, j$.
    \item Configuration with increasing variances (IV): $\sigma_{ij}^2 = \left(12+\sqrt{0.2i+j}\right)^2$ for all $i, j$.
    \item Configuration with decreasing variances (DV): $\sigma_{ij}^2 = \left(12 + \frac{1} {0.2i + j}\right)^2$ for all $i, j$.
\end{itemize}
Combining the mean configuration and the variance configurations, we may obtain three problem configurations: {MM}-CV, {MM}-IV, and {MM}-DV. We will base our analysis in the first four subsections on these problem configurations. 

When implementing the AR-OCBA procedure and the R-OCBA procedure, we need to specify the value of $n_0$ and batch size $\Delta$. By default, we let $n_0=20$ and $\Delta=20$. We will explore the robustness of the AR-OCBA procedure to these parameters in Section \ref{subsec: parameters}. In every experiment, we may vary the values of $k, m$ or the total sampling budget $N$ to explore how the probability of correct selection (PCS) of the procedures changes accordingly. Furthermore, in all experiments we estimate the PCS of a procedure in solving a specific problem based on 4000 independent macro replications.

\subsection{Comparing Stage-wise Allocation Rules}
\label{susec: compare_rules}
In this subsection, we compare the performance of the AR-OCBA procedure with different stage-wise allocation rules. Specifically, we compare the following procedures:
\begin{itemize}
    \item AR-OCBA procedure: the procedure which is described in Procedure \ref{AR-OCBA}.
    \item AR-OCBA procedure (Starving): the procedure described in Procedure \ref{AR-OCBA} but using the most starving strategy in each round to allocate the budget $\Delta$. 
        \item Equal-allocation (EA) procedure: the procedure allocates the total sampling budget equally to all the $km$ scenarios and selects the alternative according to Equation \eqref{eq: standard}. 
\end{itemize}
The EA procedure is included  as a benchmark. For each of the three procedures, we estimate the PCS under the three problem configurations MM-CV, MM-IV and MM-DV; for each configuration, we let $k=20,m=5$ and set the total sampling budget $N=(n_0+c)km$. The PCS of every procedure against different $c$ for each problem instance is plotted in Figure \ref{fig:allocation_rule}.

\begin{figure}[htbp]
    \centering
    \includegraphics[width=0.88\linewidth]{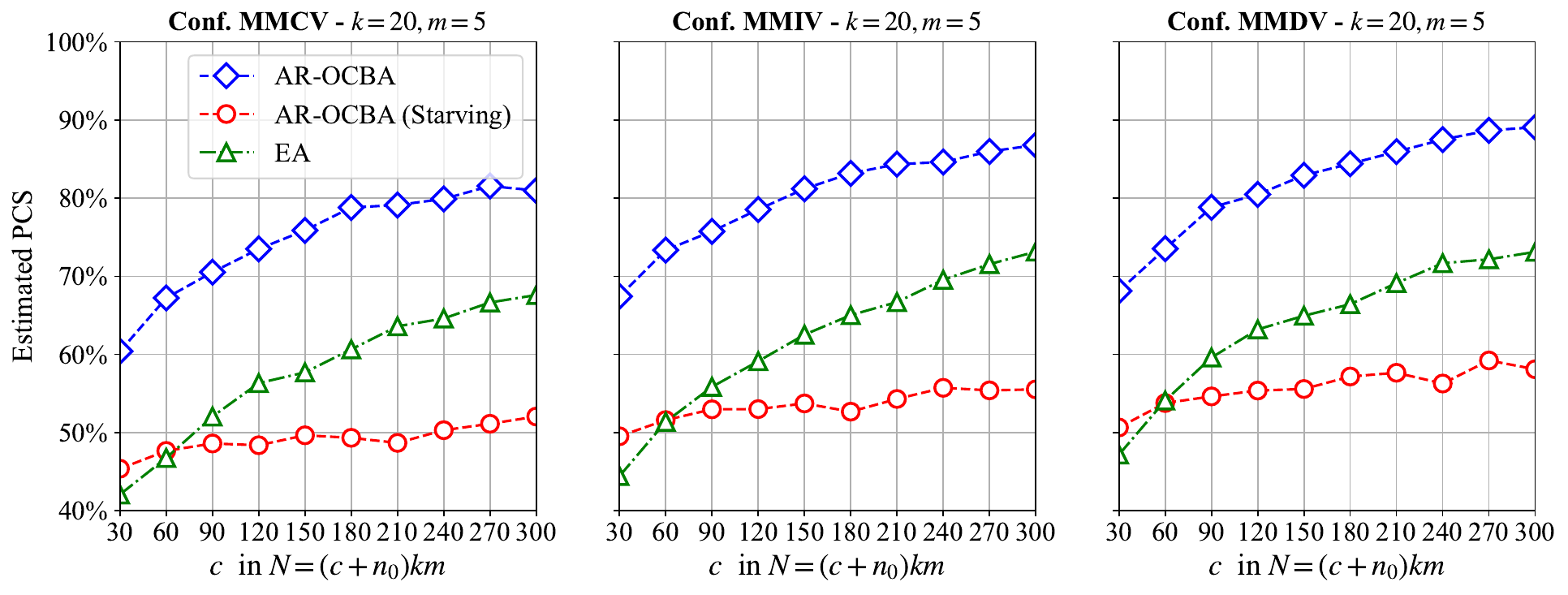}
    \caption{A comparison among the PCS of the AR-OCBA procedure with different stage-wise allocation rules}
    \label{fig:allocation_rule}
\end{figure}

From Figure \ref{fig:allocation_rule} we may observe the clear advantage of using the proportional allocation rule in the AR-OCBA procedure. For all problem configurations, the PCS of the AR-OCBA procedure is much higher than that of the procedure using the most starving allocation rule. Interestingly, the PCS of the latter may be even lower than that of the naive EA procedure, suggesting that using the most starving allocation rule in the AR-OCBA procedure may not be a reasonable choice. This result may be attributed to the fact that the most starving allocation rule is too ``greedy'' in allocating the sampling budget. Instead, the proportional allocation rule may well balance the sampling needs of the alternatives at each stage. Therefore, we adhere to the proportional rule in the design of the AR-OCBA procedure. Furthermore, with this sufficient exploration, the AR-OCBA procedure {appears to achieve the consistency as the total sampling budget grows, as we will see in the next subsection; we will discuss the consistency issue later in Section \ref{subsec: allocation}}.

\subsection{{Comparing Fixed-Budget Robust R\&S Procedures}}
\label{subsec: comparison}
In this subsection, we compare our AR-OCBA procedure with the R-OCBA procedure \citep{gao2017robust} and the R-UCB procedure \citep{wan2023upper}. We first compare them using the three problem configurations. 
In this comparison, we consider multiple values of $k$ and $m$ to represent different problem scales. Specifically, we set \(k = 20\), \(100\), and \(200\) while fixing \(m = 5\). Additionally, we vary \(m = 5\), \(20\), and \(50\) while keeping \(k = 20\). In each setting, we compare the PCS of the three procedures under various values of $c$ in the total sampling budget \(N=(n_0+c)km\). The PCS curves of the three procedures against $c$ for different $k$ are plotted in Figure \ref{fig:compare_k}, while those for different $m$ are plotted in Figure \ref{fig:compare_m}.

\begin{figure}[htbp]
    \centering
    \includegraphics[width=0.88\linewidth]{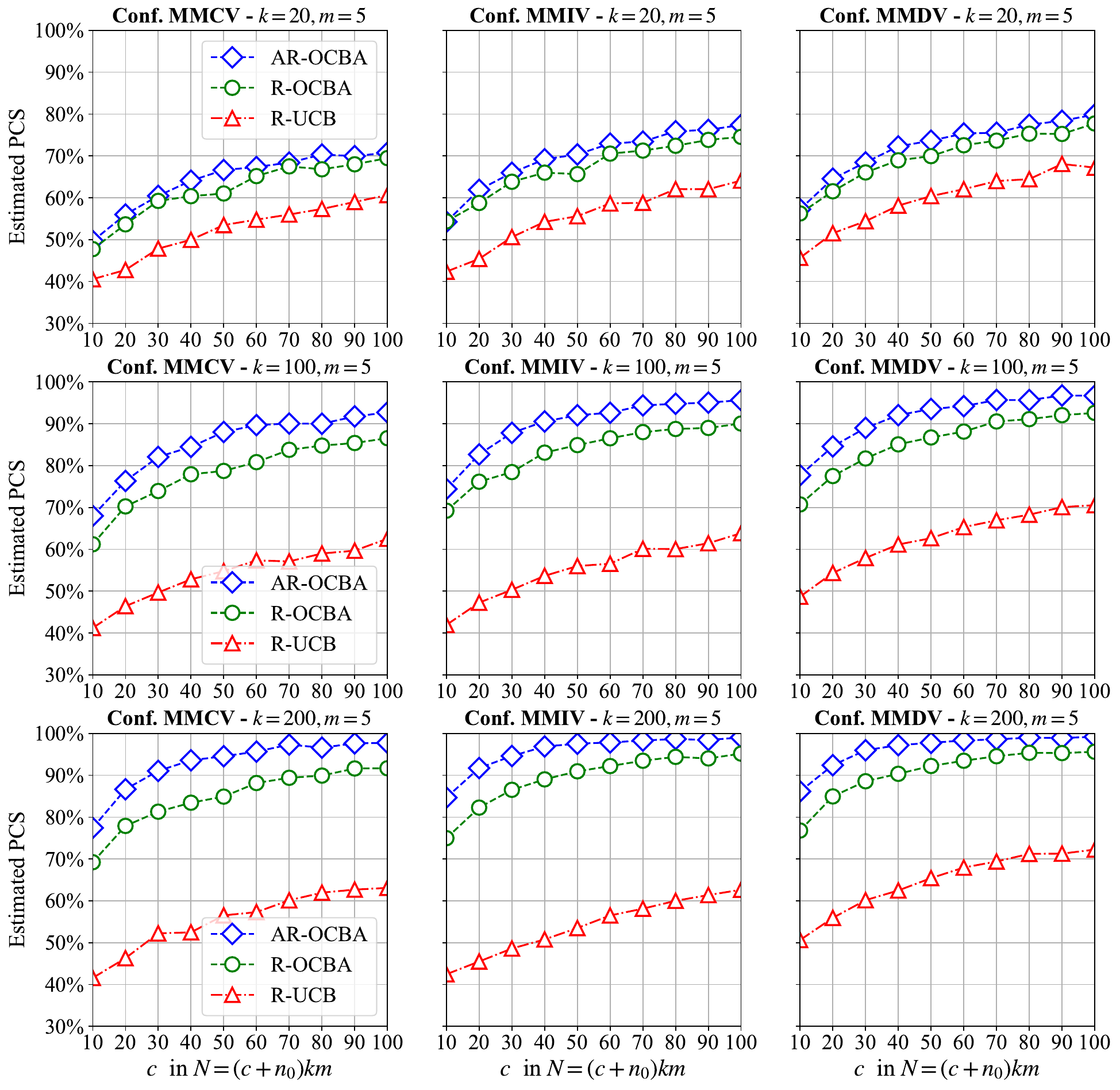}
    \caption{{A comparison among the PCS of the AR-OCBA, R-OCBA, and R-UCB procedures under different $\mathbf{k}$}}
    \label{fig:compare_k}
\end{figure}

\begin{figure}[htbp]
    \centering
    \includegraphics[width=0.88\linewidth]{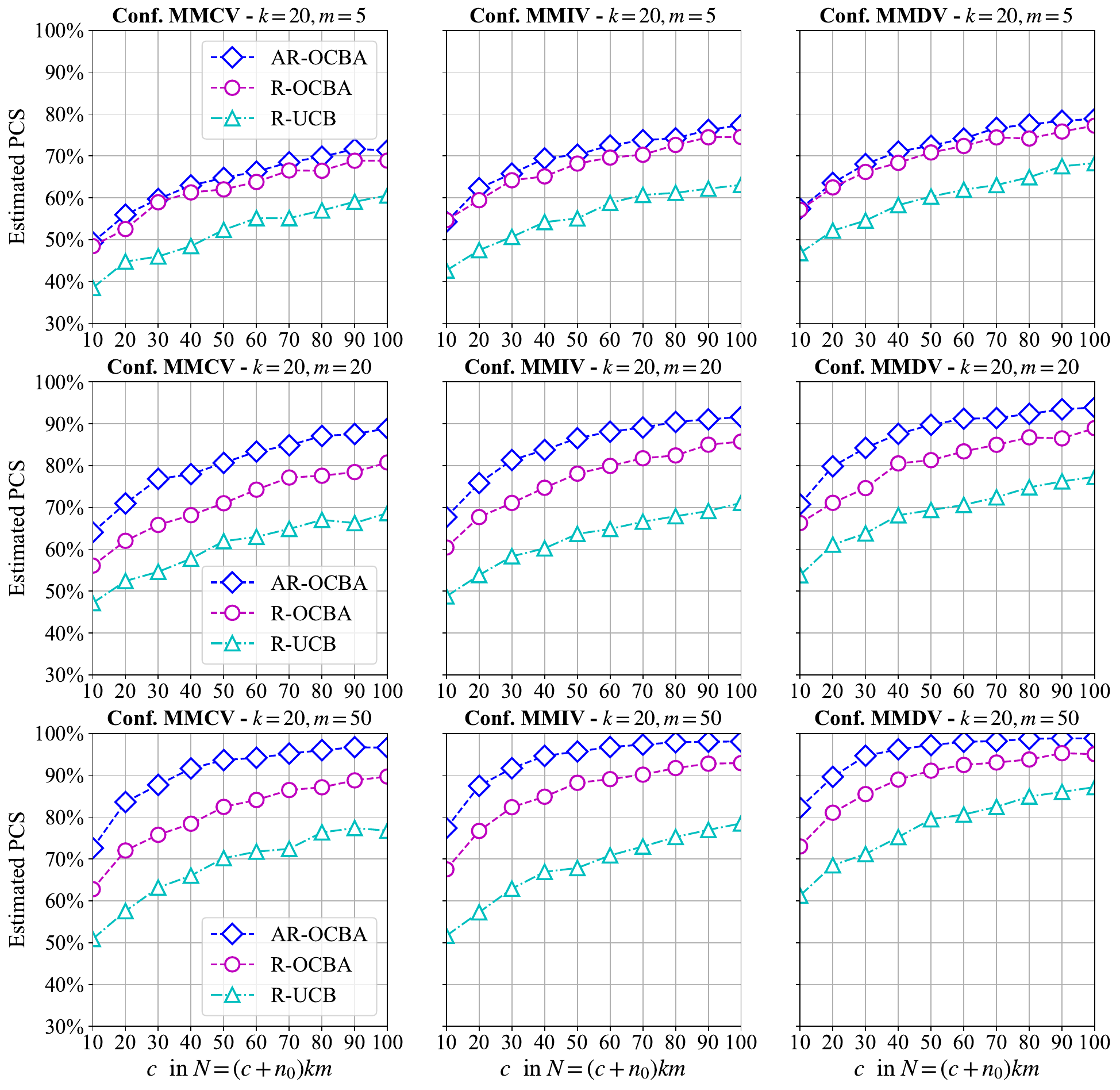}
    \caption{{A comparison among the PCS of the AR-OCBA, R-OCBA, and R-UCB procedures under different $\mathbf{m}$}}
    \label{fig:compare_m}
\end{figure}

As shown in Figure \ref{fig:compare_k}, AR-OCBA consistently demonstrates superior performance compared to R-OCBA {and R-UCB} across all problem configurations, all values of \(k\) and all budget levels. Despite the much simpler structure of the AR-OCBA procedure, it achieves slightly higher PCS than R-OCBA {and significantly higher PCS than R-UCB} at all tested budget levels when \(k = 20\) and \(m = 5\). Interestingly, when increasing \(k\) to 100 and 200, the advantage of AR-OCBA becomes more pronounced, as seen in the widening PCS {gaps between AR-OCBA and the other two procedures.} This result indicates that the AR-OCBA procedure effectively scales with $k$ and the procedure is more well-suited for scenarios with a large number of alternatives. 

The PCS comparison in Figure \ref{fig:compare_m} leads to similar findings in the above for $m$. In the figure, the AR-OCBA procedure maintains its performance edge as the number of distributions in the ambiguity set $m$ increases from \(5\) to \(50\). For example, under the MM-CV configuration, AR-OCBA continues to outperform R-OCBA {and R-UCB} across all budget levels and its performance advantage over R-OCBA {and R-UCB} becomes more evident as $m$ increases, reflecting its robustness against increased input uncertainty. This pattern persists under both MM-IV and MM-DV configurations, reinforcing its superiority in scenarios with a large number of distributions in the ambiguity set. 

{Furthermore, in Figure \ref{fig:compare_k} and Figure \ref{fig:compare_m},  the AR-OCBA procedure appears to achieve the consistency. Under both MMIV and MMDV configurations with $k=200$ or $m=50$, the PCS of the AR-OCBA procedure may become very close to 1 as the total sampling budget $N$ grows, which implies that using the proportional allocation rule in the procedure may provide sufficient exploration for these configurations. Later in Section \ref{subsec: allocation}, we will analyze the sampling budget allocation behavior of AR-OCBA and discuss the consistency issue in-depth.}

{
We now compare the performance of the three procedures in solving a practical R\&S problem. We adopt an \((s, S)\) inventory problem from the SimOpt library. Detailed descriptions of the problem settings can be found in \href{https://simopt.readthedocs.io/en/latest/sscont.html}{https://simopt.readthedocs.io/en/latest/sscont.html}. In this problem, the decision variables \(s\) and \(S\) represent the reorder point and the target inventory level, respectively. When the inventory position falls below \(s\), an order is placed to raise the inventory level to \(S\).  
In our example, the goal is to find the optimal \((s, S)\) pair that minimizes the average cost over a 500-day horizon, and  we follow the recommended parameter settings to consider the case where \(s \in [700, 1000]\) and \(S \in [1500, 2000]\). To formulate the robust R\&S problem, we discretize the decision space as \(s \in \{700, 725, 750, 775, 800, 825, 850, 875, 900, 925, 950, 975, 1000\}\) and \(S \in \{1500, 1550, 1600, 1650, 1700, 1750, 1800, 1850, 1900, 1950, 2000\}\), resulting in $k=143$ alternatives. The input uncertainty arises from the estimation of the demand distribution, and we consider an ambiguity set consisting of $m=9$ exponential demand distributions with means \(\{40, 45, 50, 55, 60, 65, 70, 75, 80\}\). In total, this leads to \(9 \times 143 = 1287\) scenarios.  To identify the worst-case scenarios and the best alternative, we estimate the mean performance of each scenario based on 10,000 independent macro replications.}

{For the comparison, we set the initial sample size \(n_0 = 10\) and use \(\Delta = 10\) for both the AR-OCBA and R-OCBA procedures. As before, the total sampling budget \(N\) is set as \(N = (n_0 + c)km\), where \(c\) varies from 10 to 40. For each \(c\), we estimate the PCS of each procedure using 1,000 independent macro replications. The PCS results, illustrated in Figure \ref{fig:example}, demonstrate that the AR-OCBA procedure consistently outperforms the other two procedures. This highlights the practical advantages of the AR-OCBA procedure in solving robust R\&S problems.
}

\begin{figure}[htbp]
    \centering
    \includegraphics[width=0.45\linewidth]{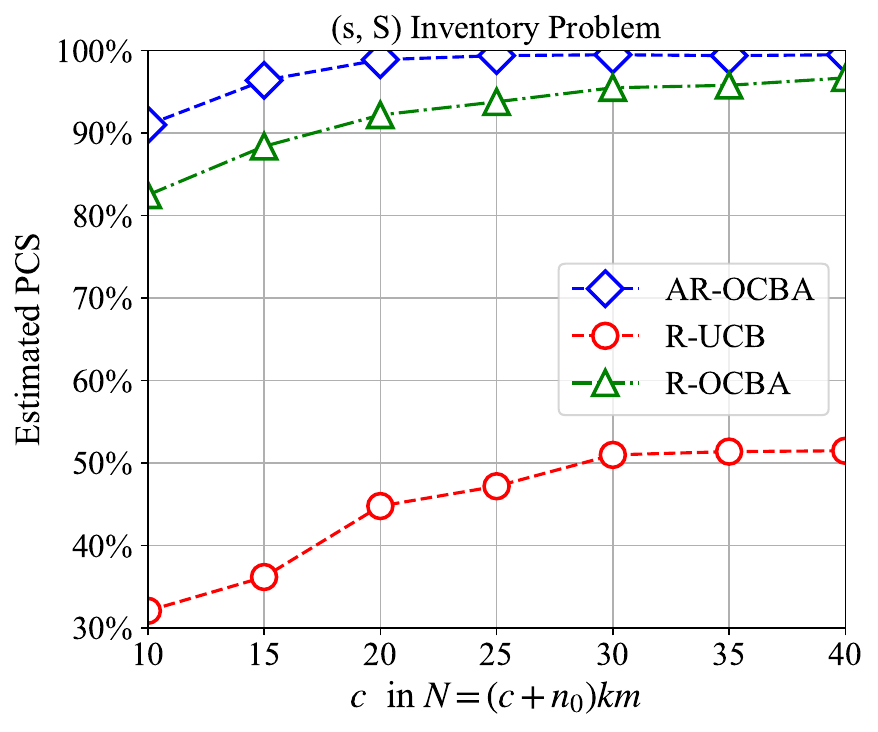}
    \caption{{A comparison among the PCS of the AR-OCBA, R-UCB, and R-OCBA procedures for the (s, S) inventory problem}}
    \label{fig:example}
\end{figure}

\subsection{Robustness Check for Procedure Parameters}
{In this subsection, we explore the impact of the choices of $n_0$ and $\Delta$ on the performance of the AR-OCBA procedure. We fix $k=20$, $m=5$ and let the total sampling budget $N=(c+n_0)km=50km$. Under this setting, we vary $n_0$ from {$3$ to $36$} while letting {$\Delta=10$} to test the impact of $n_0$. We then vary $\Delta$ from {2 to 24} while fixing {$n_0=10$} to test the impact of $\Delta$. The PCS curves of the AR-OCBA procedure against different $n_0$ and $\Delta$ are plotted in Figure \ref{fig:robustness}.}

 \label{subsec: parameters}
\begin{figure}[htbp]
    \centering
    \includegraphics[width=0.88\linewidth]{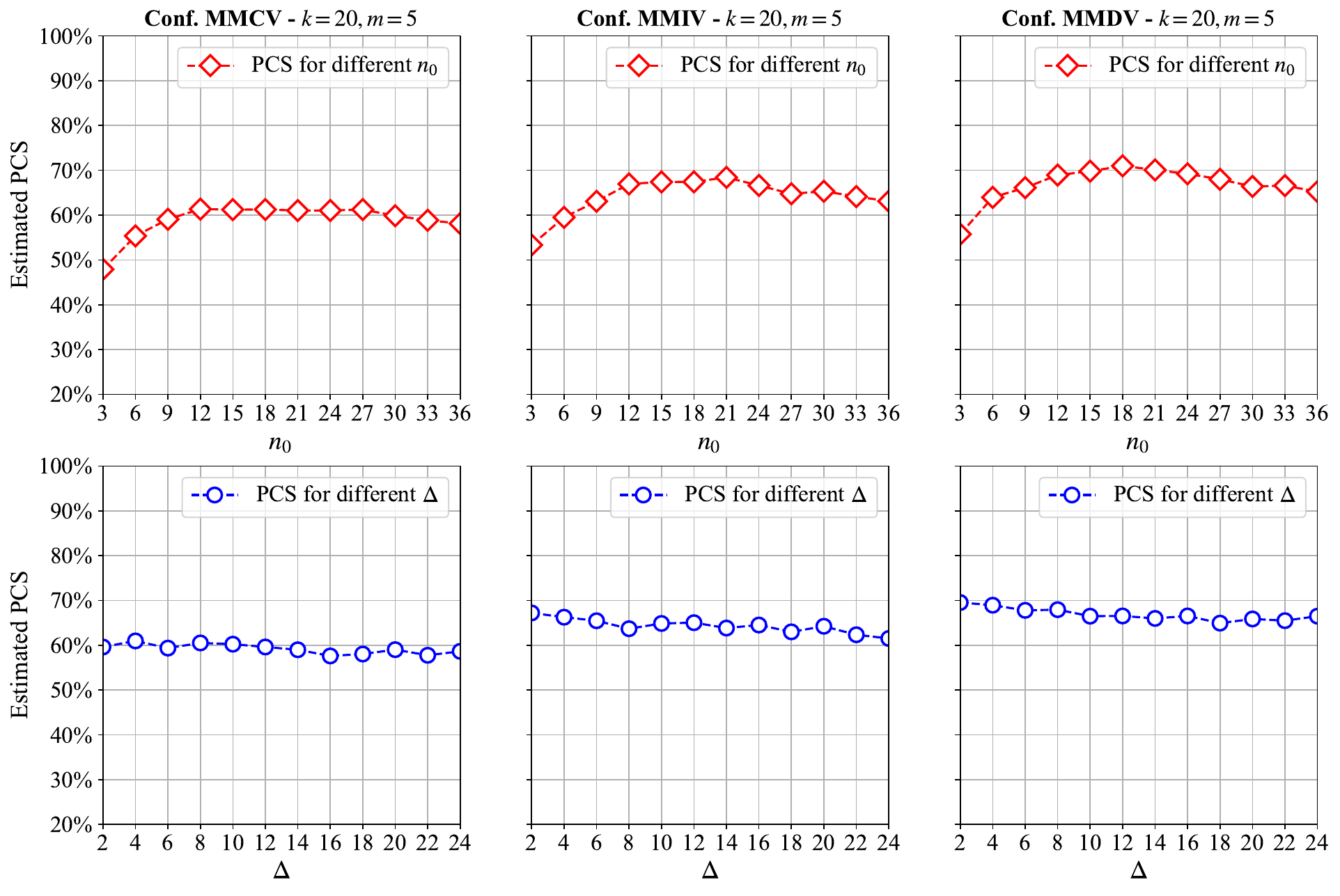}
    \caption{A comparison among the PCS of the AR-OCBA procedure under different $\mathbf{n_0}$ and $\mathbf{\Delta}$}
    \label{fig:robustness}
\end{figure}

{Figure \ref{fig:robustness} demonstrates the robustness of the AR-OCBA procedure's performance to the choices of \(n_0\) and \(\Delta\). For \(n_0\), the first row of Figure \ref{fig:robustness} shows that under all three configurations, the PCS of the AR-OCBA procedure exhibits an inverted U-shaped curve as \(n_0\) increases. Specifically, the PCS increases initially as \(n_0\) grows, reaches a near-optimal plateau over a wide range of \(n_0\) values, and then decreases for larger \(n_0\). This plateau suggests that in practice, it is not necessary to carefully tune \(n_0\). A reasonable choice between 10 and 15 may suffice to achieve near-optimal performance.}
For $\Delta$, the second row of Figure \ref{fig:robustness} shows that under all three configurations, the growth in $\Delta$ results in only mild changes in the PCS of the AR-OCBA procedure. These results imply that in practice, one does not need to carefully choose the values of $n_0$ and $\Delta$ when applying the AR-OCBA procedure.

\subsection{Sampling Budget Allocation}
\label{subsec: allocation}

In this subsection, we investigate the sampling budget allocation behavior of the AR-OCBA procedure. For clarity of presentation, we consider a small-scale problem with $k=3$ and $m=3$ where
$$
\left[\mu_{i j}\right]_{k \times m}=\left(\begin{array}{cccc}
            0.2 & 0.1  & 0.1 \\
            0.4 & 0.3  & 0.3 \\
            0.4 & 0.4 & 0.4
            \end{array}\right)
$$
and $\left[\sigma^2_{i j}\right]_{k \times m} = 1$. 
For the problem, the AR-OCBA procedure {appears to exhibit consistency} as the total sampling budget grows, as suggested by the trends of the PCS shown in Figure \ref{fig: small-consistency}. From the figure, we see that given a total sampling budget $N=5140km$, the PCS of the AR-OCBA procedure may be very close to 1. 
Then, we let $N=5140km$ and analyze the budget allocation behavior of the procedure in a single macro replication (sample path) that results in a correct selection. We plot the sample size of each scenario against the number of executed rounds of the AR-OCBA procedure and also the total allocated sample sizes of all alternatives after the total sampling budget is exhausted in Figure \ref{fig: budget_allocation}. 
\begin{figure}[htbp]
    \centering
    \includegraphics[width=0.45\linewidth]{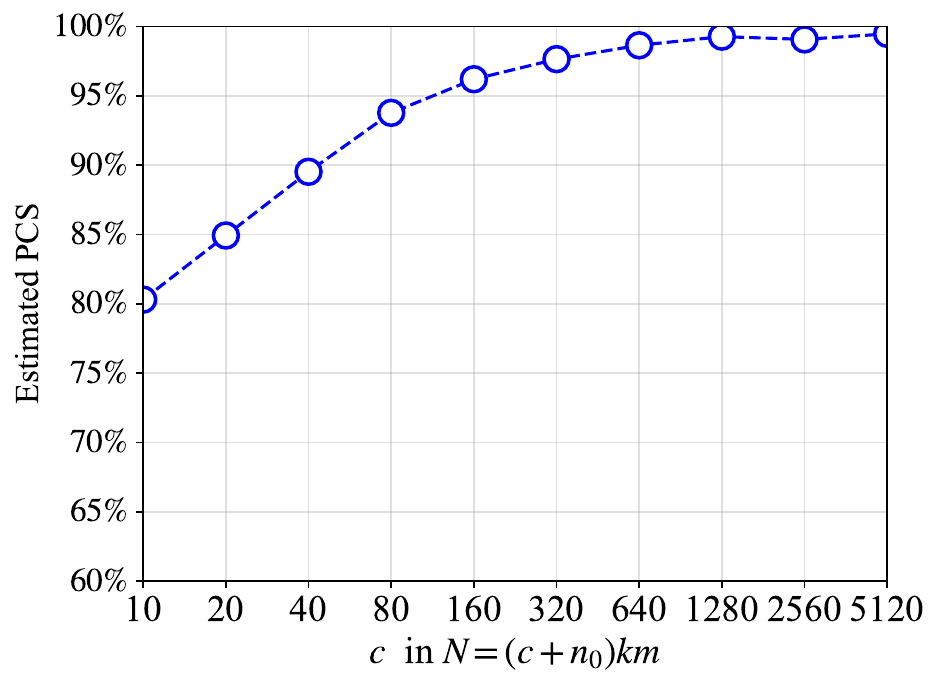}
    \caption{{An illustration of the PCS of the AR-OCBA procedure as c increases}}
    \label{fig: small-consistency}
\end{figure}
\begin{figure}[htbp]
    \centering
    \includegraphics[width=0.88\linewidth]{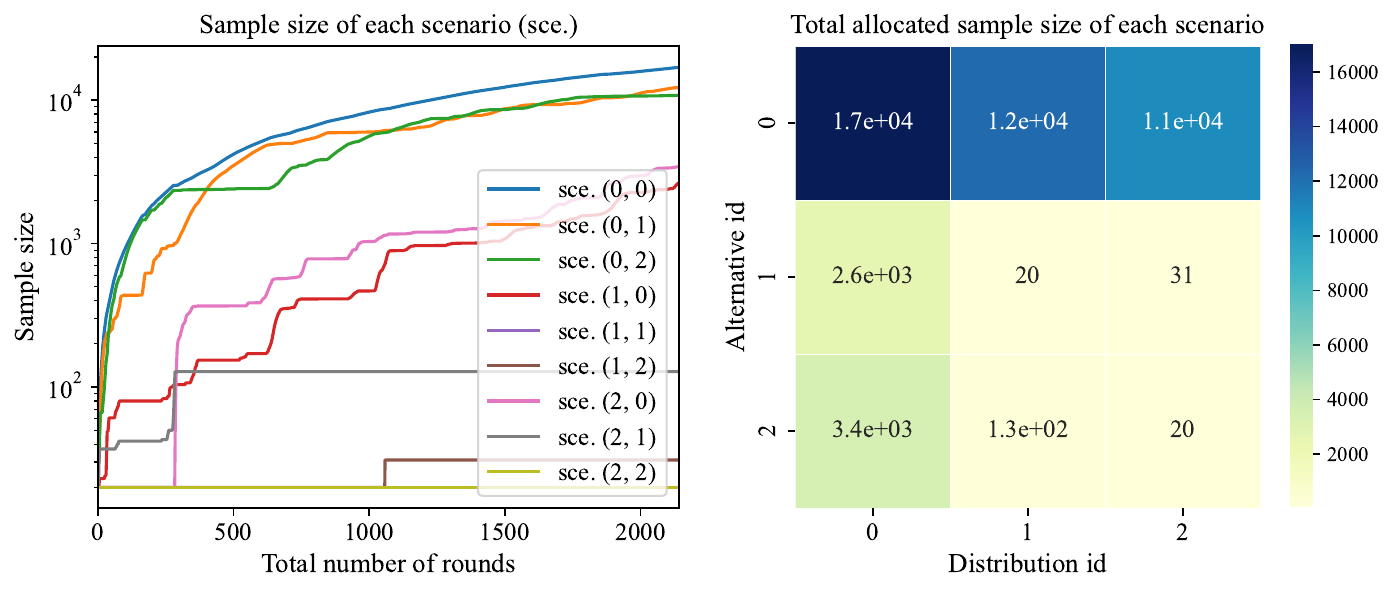}
    \caption{An illustration of the AR-OCBA procedure's budget allocation in a sample path}
    \label{fig: budget_allocation}
\end{figure}

From Figure \ref{fig: budget_allocation}, we may obtain an intuitive understanding of the sample allocation behavior of AR-OCBA. In the observed sample path, the AR-OCBA procedure concentrates only on $k+m-1=3+3-1=5$ scenarios when allocating the sampling budget. As shown in the sample size trajectories, only the scenarios $(0,0), (0,1), (0,2), (1, 0)$ and $(2,0)$ will constantly obtain new observations as the procedure proceeds. Consequently, when the sampling budget is exhausted, the sample sizes of these 5 scenarios far exceed the sample sizes of all other scenarios. Such sample allocation behavior is exactly as predicted by Theorem \ref{thm: OCBA}. 

Figure \ref{fig: budget_allocation} also reveals an important difference between the budget allocation behaviors of AR-OCBA procedure and the traditional OCBA procedure. It is known that in solving R\&S problems, the OCBA procedure will allocate a fixed non-zero proportion of the total sampling budget when it is sufficiently large, or formally speaking, approaches to infinity  \citep{wu2018analyzing, li2023convergence}. This means that as the sampling budget grows, all the alternatives will have the chance to obtain new observations. However, as discussed above, the AR-OCBA procedure may concentrate only on $k+m-1$ scenarios out of $km$ scenarios. This unique feature of selective sampling comes from the additive nature of the AR-OCBA procedure, and the demonstrated strong performance of the AR-OCBA procedure in this section highlights the importance of being additive in solving robust R\&S problems.

{This selective sampling behavior raises an question about the AR-OCBA procedure: does it truly achieve the asymptotic consistency, as suggested by Figure \ref{fig: small-consistency}? While the strong empirical performance of AR-OCBA demonstrates its effectiveness in finite-budget settings, its selective nature might pose challenges for achieving consistency in an asymptotic sense. In the R\&S literature, it is commonly understood that to achieve consistency, each alternative must receive an infinite number of observations as the total sampling budget grows to infinity, almost surely \citep{hong2021review}. However, as indicated by the budget allocation patterns shown in Figure \ref{fig: budget_allocation}, the AR-OCBA procedure may concentrate its sampling effort on only \(k + m - 1\) scenarios, potentially leaving some scenarios sampled only finitely often. This observation suggests that AR-OCBA may not satisfy the theoretical requirement of sampling all “alternatives” infinitely often, raising doubts about its asymptotic consistency. Investigating the asymptotic behavior of the AR-OCBA procedure theoretically would be an intriguing direction for future research. Addressing this potential issue could involve adopting exploration-enforcing strategies such as \(\epsilon\)-greedy exploration (see, e.g., \citealt{li2023convergence}). Specifically, the \(\epsilon\)-greedy strategy involves uniformly selecting a scenario to sample with a small probability \(0 < \epsilon < 1\) at each stage of the budget allocation process. By keeping \(\epsilon\) sufficiently small, this approach should not significantly affect the finite-budget performance of the AR-OCBA procedure while ensuring consistency.}



\vspace{0.1cm}
\section{Concluding Remarks}
\label{sec: conclusion}

In this paper, we provide a refined approach of the budget allocation for fixed-budget robust ranking and selection (R\&S) problems. By adopting an additive PICS bound of robust R\&S procedures, we derive a new optimal budget allocation scheme that minimizes this bound. Our derivation reveals a significant simplification in the allocation problem: the budget allocation for robust R\&S involving \( km \) scenarios can be mapped to the budget allocation for a traditional R\&S problem with \( k + m - 1 \) alternatives. This insight allows us to leverage established OCBA solutions for robust R\&S challenges. Accordingly, we propose the AR-OCBA procedure with a proportional stage-wise budget allocation rule. Numerical experiments validate the superiority of the AR-OCBA procedure, particularly as the problem scale increases. The observed performance improvements are consistent with our insights, confirming that the AR-OCBA procedure effectively allocates the limited sampling budget according to the additive structure of the new budget allocation scheme.

To end this paper, we point out several important directions worth of further investigation. 
{First, the numerical results in Section \ref{sec: numerical} suggest that the AR-OCBA procedure appears to exhibit  consistency. However, doubts remain regarding this property. Analyzing the asymptotic behavior of the procedure would be valuable for understanding its theoretical properties.}
Second, as our numerical analysis and \cite{gao2017robust} suggest, an additive structure of robust R\&S may be the key to efficiently solve robust R\&S problems. However, despite this insightful finding, there is currently no a theoretical characterization of this structure.
 Addressing this challenge would be valuable for sharpening the understanding towards robust R\&S. Lastly, the robust R\&S formulation adopted in this paper assumes a finite set of input distributions in the ambiguity set. However, it is possible that the ambiguity set includes infinite or even uncountable input distributions, e.g., ambiguity sets indexed by the values of continuous parameters of the input distribution. How to design efficient robust OCBA procedures to handle such ambiguity sets remains an open challenge.

\bibliographystyle{informs2014}
\bibliography{ref.bib}

\end{document}